\providecommand{\tabularnewline}{\\}
\theoremstyle{plain}
\newtheorem{thm}{\protect\theoremname}
\theoremstyle{definition}
\newtheorem{defn}[thm]{\protect\definitionname}
\theoremstyle{plain}
\newtheorem{lem}[thm]{\protect\lemmaname}
\theoremstyle{plain}
\newtheorem{fact}[thm]{\protect\factname}
\providecommand{\definitionname}{defn}
\providecommand{\theoremname}{thm}
\providecommand{\definitionname}{defn}
\providecommand{\lemmaname}{lem}
\providecommand{\theoremname}{thm}
\providecommand{\definitionname}{Definition}
\providecommand{\lemmaname}{Lemma}
\providecommand{\theoremname}{Theorem}
\providecommand{\definitionname}{Definition}
\providecommand{\lemmaname}{Lemma}
\providecommand{\theoremname}{Theorem}
\providecommand{\definitionname}{Definition}
\providecommand{\lemmaname}{Lemma}
\providecommand{\theoremname}{Theorem}
\providecommand{\definitionname}{Definition}
\providecommand{\lemmaname}{Lemma}
\providecommand{\theoremname}{Theorem}
\providecommand{\definitionname}{Definition}
\providecommand{\lemmaname}{Lemma}
\providecommand{\theoremname}{Theorem}
\providecommand{\definitionname}{Definition}
\providecommand{\factname}{Fact}
\providecommand{\lemmaname}{Lemma}
\providecommand{\theoremname}{Theorem}
\providecommand{\definitionname}{Definition}
\providecommand{\factname}{Fact}
\providecommand{\lemmaname}{Lemma}
\providecommand{\theoremname}{Theorem}
\begin{document}

\title{Point Sweep Coverage on Path}

\author{Dieyan Liang\thanks{School of data science and computer science, Sun Yat-sen University,
Guangzhou China},~ Hong Shen\thanks{1. School of data science and computer science, Sun Yat-sen University,
Guangzhou China 2. School of Computer Science, University of Adelaide,
Australia},~}
\maketitle
\begin{abstract}
An important application of wireless sensor networks is the deployment
of mobile sensors to periodically monitor (cover) a set of points
of interest (PoIs). The problem of Point Sweep Coverage is to deploy
fewest sensors to periodically cover the set of PoIs. For PoIs in
a Eulerian graph, this problem is known NP-Hard even if all sensors
are with uniform velocity. In this paper, we study the problem when
PoIs are on a line and prove that the decision version of the problem
is NP-Complete if the sensors are with different velocities. We first
formulate the problem of Max-PoI sweep coverage on path (MPSCP) to
find the maximum number of PoIs covered by a given set of sensors,
and then show it is NP-Hard. We also extend it to the weighted case,
Max-Weight sweep coverage on path (MWSCP) problem to maximum the sum
of the weight of PoIs covered. For sensors with uniform velocity,
we give a polynomial-time optimal solution to MWSCP. For sensors with
constant kinds of velocities, we present a $\frac{1}{2}$-approximation
algorithm. For the general case of arbitrary velocities, we propose
two algorithms. One is a $\frac{1}{2\alpha}$-approximation algorithm
family scheme, where integer $\alpha\ge2$ is the tradeoff factor
to balance the time complexity and approximation ratio. The other
is a $\frac{1}{2}(1-1/e)$-approximation algorithm by randomized analysis. 
\end{abstract}

\begin{IEEEkeywords}
wireless sensor networks; mobile sensors; sweep coverage; approximation
algorithm; combinatorial mathematics 
\end{IEEEkeywords}

\section{Introduction}

Coverage is one of the most important applications of wireless sensor
networks (WSNs). In applications of WSN, sensors are placed in an
area of interest (AoI) to monitor the environment and detect extraordinary.
Coverage problem have been gotten much attention because of its importance.
Many researches have been studied on this topic on various scenes,
including discrete points\cite{gorain2013point}, 2-dimensional surface,
3-dimensional surface\cite{kong2014surface}, 3-dimensional space,
fence\cite{li2015minimizing} and so on. Based on sensors' characteristic,
special factors should be considered such as energy efficiency, maintaining
connectivity and so on\cite{yu2014connected,sengupta2013multi}.

However, most of the existing work mainly focus on continuous coverage,
where sensors stay still after placed on the objective locations.
Not until recently did sweep coverage be brought up to study the periodical
coverage situation. Periodical coverage is also a typical application.
For example, guards need to patrol barrier periodically where intruders
need some time to cross; Information collector should collect information
from the objective sensors periodically in case that their memory
overflow. In those situations, the objects do not need to covered
continuously. So mobile sensors can move around to cover more objects
than continuous coverage to reduce monitoring cost. Therefore, sweep
coverage begins to be payed attention after being brought up\cite{cheng2008sweep,collins2013optimal,gorain2013point,gorain2014line,pasqualetti2012cooperative}.

In this paper, we focus on the point sweep coverage problem in which
a set of discrete points need to be covered by a given set of mobile
sensors at least once within a given period of time. The point sweep
coverage problem was first brought up in the paper\cite{cheng2008sweep}
that showed finding minimum number of mobile sensors with a constant
velocity to cover PoIs in an Eulerian graph is NP-Hard and can not
be approximated within a factor of 2. Until now, there have not been
papers to discuss the situation in which PoIs are sweep covered by
mobile sensors with arbitrary velocities because of its hardness.
Nevertheless, it is an important situation because the maximum velocities
of mobile sensor would be reduced along with their energy used up.
In this paper, we start to study the situation when mobile sensors
have arbitrary velocities. It must be a NP-Hard problem when PoIs
in graph. We want to study a simpler scene that PoIs are distributed
on path. We call it point sweep coverage on path (PSCP) problem, which
is to find whether or not a given set of mobile sensors cover all
the PoIs on path periodically. 

Point sweep coverage problem on path is a simpler scene than graph
but also of good practical use. For example, as illustrated in Figure
1, numbers of static sensors are placed on key locations to collect
real-time information of ocean or resources. A set of mobile sensors
are given to collect the data from the static sensors periodically
in case that the memory of the static sensors overflow. It is one
of classic practical scene of PSCP problem. Besides, PSCP problem
is of practical use in security, forest conservation, resource exploration
and so on. It makes PSCP problem necessary to study.

\begin{figure}
\caption{Mobile sensors sweep coverage around island to collect the data from
the static sensors}

\centering\includegraphics[scale=0.5]{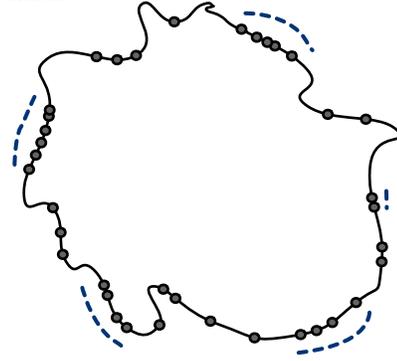}
\end{figure}


In this paper, we prove that PSCP problem is NP-Complete and define
its optimization problem, Max-Weight sweep coverage on path (MWSCP),
which is NP-Hard. The MWSCP problem is to find the maximum sum of
weight of PoIs covered by a given set of mobile sensors when PoIs
are distributed on a path. When the weight of PoIs is one, MWSCP problem
turns to be a special case, Max-PoI sweep coverage on path (MPSCP)
problem, whose object is to maximum the number of PoIs sweep covered.
We analyze the special cases and general case of MWSCP respectively,
propose an optimal algorithm for the uniform velocity case and approximation
algorithms for others.

The main contributions of this paper are summarized as follows: 
\begin{itemize}
\item We define the point sweep coverage on path (PSCP) and its variant
problems, Max-Weight sweep coverage on path (MWSCP) problem. 
\item We prove PSCP is NP-Complete and MWSCP is NP-Hard. 
\item For the special cases of MWSCP problem when sensors have uniform velocity,
we present an optimal algorithm. 
\item For the special cases of MWSCP problem when sensors have constant
number of velocities, we propose a $\text{\ensuremath{\frac{1}{2}}}$-approximation
algorithm. 
\item For the general cases of MWSCP problem when sensors have arbitrary
velocities, we propose two algorithms. One is a $\frac{1}{2\alpha}$-approximation
algorithm family scheme, where integer $\alpha\ge2$ is the tradeoff
factor to balance the time complexity and approximation ratio. And
the other is a $\frac{1}{2}(1-1/e)$-approximation algorithm by randomized
analysis. 
\end{itemize}
The rest of this paper is organized as follows: Section 2 describes
some related work. In Section 3, the definition and NP-Completeness
proof of the PSCP problem are given. In Section 4, we define MPSCP
problem and MWSCP problem, present our optimal and approximation algorithms
for different cases of MWSCP problem. Section 5 concludes the paper.

\section{Related work}

Point sweep coverage was firstly brought up in \cite{cheng2008sweep}.The
authors presented a min-sensor point sweep coverage problem (MSPSC)
to find the minimum number of sensors to sweep cover PoIs in Eulerian
graph, which was proved NP-Hard by transforming TSP problem to it.
The problem could not be approximated within ratio 2 and local algorithms
can not work. In \cite{du2010sweep}, the authors distinguished sensors'
strategies, proposed MinExpand algorithm for un-cooperated sensors
and Osweep algorithm for cooperated sensors respectively. The mistake
of approximation analysis of prior papers was rectified by Gorain
et al.\cite{gorain2015approximation}. 3-approximation algorithm for
MSPSC problem was proposed and it may be the best approximation algorithm
for MSPSC until now. No polynomial time constant factor approximation
algorithm for MSPSC was also proved. Some variants of MSPSC were also
presented. When PoIs had different sweep periods, a $\mathcal{\text{O(log}\rho})$-approximation
algorithm were proposed, where $\rho$ was the ratio of the maximum
and minimum sweep periods among PoIs. And there were area sweep coverage
problem and line sweep coverage problem proposed\cite{gorain2014approximation,gorain2014line},
The problem of area sweep coverage are NP-Hard and a $(\sqrt{2}+\frac{2-\sqrt{2}}{mn})$-approximation
algorithm for that were proposed. The line sweep coverage problem
was also NP-Hard. Gorain et al. proposed a 2-approximation algorithm
and proved the problem can not be approximated within 2. In \cite{chen2016efficient},
the effect of sensing range was considered. The authors proposed DSRS
problem and proved it NP-Hard. In \cite{gorain2016solving}, energy
consumption was taken into consideration. Two new problem were proposed.
One was to minimizing energy consumption and the other was to minimizing
the number of sensors when energy consumption was bounded. All the
above papers only focused on the point sweep coverage problem when
sensors had uniform velocity. No approximation algorithm had been
brought up for sensors having different velocities yet.

Even though the point sweep coverage problem was presented firstly
by \cite{cheng2008sweep}, the concept of sweep coverage initially
came from the context of robotics. The researches on robotics often
focused on sweep covering continuous lines, and the problem was called
boundary patrolling or fence patrolling. In these researches, the
aim was to find the minimized idleness, i.e., the longest time interval
during which there is at least one point on the boundary still uncovered
by any mobile sensors. In \cite{czyzowicz2011boundary}, the authors
firstly studied boundary patrolling problem and proposed two intuitional
algorithms for open and close fence patrolling. The optimality of
the intuitional algorithms was disproved in \cite{dumitrescu2014fence,kawamura2015fence,kawamura2015simple}.
Some examples were proposed to illustrate that the idleness could
be reduced to 41/42, 24/25 even 3/4 by special design, assuming the
idleness of proportional solution presented in \cite{czyzowicz2011boundary}
is 1 . Even though the optimality of algorithms presented by Czyzowicz
et al. was disproved, the optimal solution had not been brought up
yet. In \cite{pasqualetti2012cooperative}, the authors extended the
scenes of the Min-idleness point sweep coverage problem to chain,
tree, and cyclic roadmap. Within tolerance $\epsilon$, they could
get an optimal idleness when PoIs were on chain in time complexity
$O(nlog(\epsilon^{-1}))$. And a 8-approximation algorithm was proposed
to find min-idleness when PoIs were on cyclic roadmap. The problem
is NP-Hard. In \cite{collins2013optimal}, the authors described a
fragmented boundaries environment and found the optimal solution for
Min-idleness in that environment.

\section{Point sweep coverage on path}

In this section, we will study PSCP problem. Firstly we present the
definition of PSCP problem. Then we prove PSCP problem NP-complete.

The definition of point sweep coverage on path is given below according
to the definition of point sweep coverage\cite{cheng2008sweep}. 
\begin{defn}
(Point Sweep Coverage on Path problem) Given a set of PoIs $P=\{p_{1},p_{2},..,p_{N}\}$
distributed on a path, each one $p_{i}$ needs to be covered every
$T$ time period. If $p_{i}$ is covered within $T$ time period,
then we call $p_{i}$ is $T$-sweep covered. Time period $T$ is also
called sweep period. Given $M$ mobile sensors $S=\{s_{1},s_{2},..,s_{M}\}$
patrolling the $N$ PoIs, find whether the mobile sensors can sweep
cover all PoIs. 
\end{defn}
Without loss of generality, when PoIs are on an open path, we can
simplify that the set of PoIs are located on x-line with x-coordination
$X=\{x_{1}=0,x_{2},...,x_{N}\}(x_{1}<x_{2}<...<x_{N})$. Let $V=\{v_{1},v_{2},...,v_{M}\}$
where $v_{i}$ is the velocity of sensor $s_{i}\in S$ $(1\leq i\leq M)$.
PoIs have uniform sweep period $T$. Let $L$ denotes the line segment
from $1^{st}$ PoI to $N^{th}$ PoI. The algorithm for open path can
be easily transformed to the one for close path, so we do not belabor
the algorithm for close path.

From the existing work, we know there are thousands of strategies
for mobile sensors to cover PoIs and mainly be classified to 2 kinds
. Between those, there is a kind of simple strategy, separated strategy,
by which mobile sensors move back and forth to cover a line segment
without cooperating with others and each PoI is only covered by the
same mobile sensor periodically. 
\begin{defn}
(Separated strategy)\cite{collins2013optimal}. Each mobile sensor
moves back and forth on the line segment, and the trajectories of
different sensor do not overlap with each other. 
\end{defn}
Separated strategy is not necessarily the optimal strategy. There
are complicated strategies by which some sensors cooperate with others
to cover the same line segment and PoIs are sweep covered by more
than one mobile sensor, which are classified as cooperated strategy.
Some examples are proposed to show that separated strategy may slightly
worse than some complicated strategy \cite{dumitrescu2014fence,kawamura2015fence}
in some special cases, i.e., the covering range covered by a set of
mobile sensors using separated strategy would be slightly shorter
than some complicated strategy. However, until now, how to cooperate
among sensors to get the optimal monitoring efficiency is still unknown.

Using separated strategy, every mobile sensor has its own covering
region. The range of the region is denoted by $r_{i}$, $r_{i}\leq v_{i}T/2$
for each $s_{i}\in S$. In optimal deployment, we can just assume
$r_{i}=v_{i}T/2$. Then a sensor's trajectory can be denoted by the
first PoI's location in its covering region and its covering range.
Since the covering range of a sensor has already decided by the velocity
of the sensor in optimal separated strategy, we can just use the first
PoI's locations in sensors' covering region to denote the separated
deployment.

The notations will illustrated in table 1.

\begin{table*}
\begin{centering}
\caption{Notations}
\par\end{centering}
\centering{}
\begin{tabular}{ccl}
\toprule 
Symbol  & Definition  & \tabularnewline
\midrule 
$N$  & the number of the PoIs  & \tabularnewline
$M$  & the number of the sensors  & \tabularnewline
$P$  & the set of PoIs $\{p_{1},p_{2},..,p_{N}\}$  & \tabularnewline
$S$  & the set of mobile sensors $\{s_{1},s_{2},..,s_{M}\}$  & \tabularnewline
$X$  & the locations of PoIs $\{x_{1},x_{2},\ldots,x_{N}\}$ $(x_{1}<x_{2}<...<x_{N})$  & \tabularnewline
$T$  & the uniform sweep period of PoIs  & \tabularnewline
$V$  & velocities of the sensors $\{v_{1},v_{2},...,v_{M}\}$  & \tabularnewline
$W$  & the weight of the PoIs $\{w_{1},w_{2},...,w_{N}\}$  & \tabularnewline
\bottomrule
\end{tabular}
\end{table*}

\subsection{Problem Hardness}

In this section, we want to prove PSCP problem NP-Complete. Before
the proof, let us recall the definition of the 3-Partition problem
and quote a theorem from paper\cite{kawamura2015fence}. Then in the
proof, we will transform 3-Partition problem to PSCP problem. 
\begin{defn}
(3-Partition)\cite{gary1979computers}

INSTANCE: Set A of 3m elements, a bound $B\in Z^{+}$, and a size
$s(a)\in Z^{+}$for each $a\in A$ such that $B/4<s(a)<B/2$ and such
that $\sum_{a\in A}s(a)=mB.$

QUESTION: Can A be partitioned into m disjoint sets $A_{1},A_{2},...,A_{m}$
such that, for $1\leq i\leq m$, $\sum_{a\in A_{i}}s(a)=B$ (note
that each $A_{i}$ must therefore contain exactly three elements from
A) ? 
\end{defn}
\begin{thm}
\cite{kawamura2015fence}For three sensors or two sensors, the separated
strategy is optimal. 
\end{thm}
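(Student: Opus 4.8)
The plan is first to recast ``the separated strategy is optimal'' into a clean extremal statement. Under the separated strategy each sensor $s_{i}$ sweeps its own subsegment of length $v_{i}T/2$, so the two (resp.\ three) sensors together $T$-sweep cover a segment of total length $\frac{T}{2}\sum_{i}v_{i}$. Hence the theorem is equivalent to the impossibility claim that, when the number of sensors is at most three, \emph{no} strategy whatsoever (cooperative or not) can $T$-sweep cover a segment longer than $\frac{T}{2}\sum_{i}v_{i}$. Since achievability by the separated strategy is immediate, the entire content lies in proving this lower bound, and crucially the bound must fail to admit a generic proof: for four or more sensors cooperative hand-offs do slightly better, so any correct argument has to exploit that there are only two or three sensors.

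I would model an arbitrary strategy by Lipschitz trajectories $f_{i}$ on the path $[0,L]$ with $|f_{i}'|\le v_{i}$, and encode $T$-coverage as the requirement that every $x\in[0,L]$ is visited by some sensor within every time-window of length $T$. A first attempt is a level-crossing count: over a long horizon $[0,\tau]$ the total variation of $f_{i}$ is at most $v_{i}\tau$, so writing $c_{i}(x)$ for the number of times $f_{i}$ crosses level $x$ we get $\sum_{i}\int_{0}^{L}c_{i}(x)\,dx\le\sum_{i}v_{i}\tau$, while coverage forces $\sum_{i}c_{i}(x)\ge \tau/T-O(1)$ for each $x$. This yields only $L\le T\sum_{i}v_{i}$, a factor of two too weak. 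I would record this as motivation: the same count is exactly tight on a \emph{cycle} of circumference $C$, giving $C\le T\sum_{i}v_{i}$, and unfolding the back-and-forth motion identifies the open path $[0,L]$ with a cycle of circumference $2L$. The missing factor of two is precisely the turn-around cost forced at the two endpoints, which the naive count ignores.

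To recover that factor for at most three sensors I would normalize the trajectories by an exchange argument: whenever two trajectories cross, relabel the sensors so that a fixed left-to-right order $f_{1}\le f_{2}\le f_{3}$ is maintained. The subtlety is that with \emph{distinct} speeds relabeling is not free, so a swap is performed only when it respects the speed constraints, i.e.\ at each crossing the faster sensor is assigned the role that subsequently requires the larger excursion. Once an order is fixed, the leftmost sensor is the only one that can reach $0$ and the rightmost the only one that can reach $L$, and each such visit is a genuine turn-around. I would then charge, over one period, the distance spent against the length refreshed, so that the two endpoint turn-arounds contribute exactly the extra factor of two and force $L\le\frac{T}{2}\sum_{i}v_{i}$.

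The hard part is the three-sensor case, and within it the middle sensor. For two sensors an ordered pair automatically splits the path at their (time-varying) meeting locus and the charging closes with almost no case work. For three sensors one must rule out every cooperative pattern in which the middle sensor alternately assists the left and the right sensor near the two endpoints --- exactly the hand-off mechanism that lets four or more sensors beat the partition bound --- so the proof must use that three sensors are too few to sustain such a pattern over a full period. I expect the bulk of the effort to be a finite case analysis over the cyclic order in which the three sensors can take responsibility for the two endpoint regions and the middle, showing that in every case the distance budget $\frac{T}{2}\sum_{i}v_{i}$ is already saturated by the endpoint turn-arounds and cannot be stretched by any hand-off, so no cooperative schedule exceeds the separated value.
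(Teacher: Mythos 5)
First, a point of reference: the paper does not prove this statement at all. Theorem 4 is quoted from the fence-patrolling literature \cite{kawamura2015fence} and used as a black box inside the NP-completeness reduction, so there is no in-paper proof to compare yours against; your attempt has to stand on its own as a proof of that cited result. As such it is a correct reformulation (optimality of the separated strategy is equivalent to the impossibility of $T$-sweep covering any segment longer than $\frac{T}{2}\sum_i v_i$ with at most three sensors), plus a correct but too-weak crossing bound $L\le T\sum_i v_i$, plus an accurate map of where the difficulty lies --- but the difficulty itself is not resolved, and the two places you flag as delicate are genuine gaps, not technicalities.

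The first gap is the normalization step. Maintaining a fixed left-to-right order $f_1\le f_2\le f_3$ by relabeling at crossings is exactly what fails for sensors with \emph{distinct} speeds: when two trajectories with different speed bounds cross, exchanging roles generally violates the speed constraints, and your proposed repair (``assign the faster sensor the role that subsequently requires the larger excursion'') is not a construction --- ``subsequently'' refers to the entire future of the schedule, the two roles can alternate in which one demands the larger excursion, and no local swap rule can be shown globally consistent without essentially proving the theorem first. Worse, if such an ordering normalization were available it would apply verbatim to any number of sensors and would show the separated strategy is always optimal, which is false for four or more sensors \cite{dumitrescu2014fence,kawamura2015fence}; so any correct argument must \emph{fail} to provide this ordering, and the speed-dependence you wave at is the entire content. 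The second gap is the three-sensor case, which you explicitly leave as an expectation: you name the hand-off pattern of the middle sensor as the phenomenon to rule out, but give neither the case list over endpoint responsibilities nor the quantitative charging that closes each case, and the charging idea itself is unproven --- a turn-around at an endpoint constrains only the sensor performing it, whereas the claimed factor of two must apply to the whole budget $\sum_i v_i$. Since the four-sensor counterexamples exploit precisely the mechanism you defer to case analysis, the part you postponed is the theorem.
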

\begin{thm}
Point sweep coverage on path is NP-Complete problem. 
\end{thm}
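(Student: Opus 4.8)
The plan is to establish the two halves of NP-completeness separately: membership in NP, then NP-hardness by a polynomial reduction from the 3-Partition problem quoted above, using the cited optimality theorem (separated strategy is optimal for two or three sensors) as the essential tool in the hardness direction. For membership, I would argue that a YES-instance admits a succinct certificate: it suffices to exhibit, for each sensor, a constant-speed periodic back-and-forth trajectory whose turning points lie in a polynomially bounded candidate set (determined by the PoI coordinates and the half-ranges $v_iT/2$), together with an assignment of each PoI to a covering sensor. One then checks in polynomial time that every PoI is revisited within every window of length $T$. The only delicate point is bounding the descriptive complexity of a valid schedule, which I would handle by restricting to canonical trajectories (full speed, reversals only at the discrete candidate set); this is enough for membership even though cooperative schedules are in principle richer.

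For the reduction, given a 3-Partition instance with $3m$ elements of integer sizes $s(a)\in(B/4,B/2)$ and $\sum_{a}s(a)=mB$, I would create one sensor per element with velocity $v_a=2s(a)/T$, so that its separated half-range is exactly $r_a=s(a)$. On the line I would lay out $m$ consecutive \emph{zones}, each an interval of length exactly $B$ filled with PoIs at a fine uniform spacing $d<1/2$, and separate consecutive zones by empty gaps of length $B$. The gap length exceeds any single sensor's window reach $v_aT=2s(a)<B$, so no sensor can ever cover PoIs in two different zones; hence each sensor is committed to at most one zone. The forward direction is then immediate: a partition into triples each summing to $B$ lets the three sensors of a triple tile one length-$B$ zone in separated strategy, covering all its PoIs within period $T$.

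The reverse direction is where the real work lies, and it is also the \textbf{main obstacle}: a priori the sensors may cooperate, and cooperative patrolling can $T$-cover a connected segment strictly longer than the sum $\sum r_i$ of the separated ranges (exactly the phenomenon behind the fence-patrolling counterexamples for four or more sensors). I would defuse this as follows. First, a zone covered by at most two sensors is impossible: by the cited theorem separated strategy is optimal for two sensors, so they cover length at most $r_1+r_2$, which is an integer strictly below $B/2+B/2=B$, hence at most $B-1$; but the dense PoIs force any covering union to have length at least $B-d>B-1$, a contradiction. Therefore every zone receives at least three sensors; since there are exactly $3m$ sensors and $m$ zones, every zone receives exactly three and none is wasted.

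The crucial point is that, having forced \emph{exactly} three sensors per zone, I may again invoke the cited theorem (optimality of separated strategy for three sensors) to bound their joint coverage by $r_1+r_2+r_3$. The density argument then gives $r_1+r_2+r_3\ge B-2d>B-1$, and integrality forces $r_1+r_2+r_3\ge B$ in each zone; summing over the $m$ zones against the global budget $\sum_a r_a=mB$ forces equality $r_1+r_2+r_3=B$ everywhere. With each $s(a)\in(B/4,B/2)$, a triple summing to $B$ is precisely a 3-Partition class, so a valid covering yields a valid partition. In this way the construction pushes the uncontrolled ``four-or-more cooperating sensors'' regime out of existence, so only the two- and three-sensor cases of the theorem, which are exactly the cases known to hold, are ever needed; the integrality of the sizes and the fineness of $d$ are what make the counting bounds tight. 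I expect the membership certificate and the cooperative-coverage bound to be the two points most in need of careful justification.
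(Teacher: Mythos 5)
Your proposal follows essentially the same route as the paper's own proof: the identical reduction from 3-Partition (one sensor per element with half-range $r_a=s(a)$, length-$B$ zones of densely spaced PoIs separated by length-$B$ gaps), the same two-stage invocation of the cited two-/three-sensor optimality theorem—first to rule out zones served by at most two sensors and force exactly three per zone by counting, then to bound the three sensors' joint coverage by $r_1+r_2+r_3$—and the same integrality-plus-global-budget argument forcing each triple to sum to exactly $B$. The only differences are cosmetic: you use a generic spacing $d<1/2$ where the paper uses $B/(2B+1)$, and your NP-membership discussion via canonical trajectories is somewhat more detailed (though, as you note, still in need of justification) than the paper's one-sentence dismissal of that direction.
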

\begin{proof} Given a 3-Partition instance like definition 3, we
construct a PSCP instance. Given a set of $N$ PoIs, where $N=(2B+2)*m$,
the positions of PoIs are $\{x_{1},x_{2},...,x_{N}\}(x_{1}<x_{2}<...<x_{N})$.
$d_{i}=x_{i+1}-x_{i}(1\leq i<N)$ means the distance between $(j+1)^{th}$
PoI and $j^{th}$ PoI. The positions satisfy the equation below.

\begin{alignat}{1}
x_{1} & =0\\
d_{j} & =B\nonumber \\
 & (j=k*(2B+2);1\leq k\leq m-1)\\
d_{j} & =B/(2B+1)\nonumber \\
 & ((k-1)*(2B+2)<j\leq k*(2B+2);1\leq k\leq m)
\end{alignat}

Given a set $S$ of $M=3m$ mobile sensors, the velocity of mobile
sensor $s_{i}$ is $v_{i}=2*s(a_{i})/T$ for $a_{i}\in A$ $(1\leq i\leq M)$.
As mentioned before, if using separated strategy, each mobile sensor
has their own covering range $r_{i}=v_{i}T/2=s(a_{i})$, $1\leq i\leq M$.
So we get $B/4<r_{i}<B/2$, $\sum_{1\leq i\leq M}r_{i}=mB.$

Let $B_{j}$ denote the line segment from $x_{(j-1)*(2B+2)+1}$ to
$x_{j*(2B+2)}$ for $1\leq j\leq m$. Because of equation (3), we
get $|B_{j}|=B$. If the 3-Partition instance is satisfied, we get
$\sum_{a\in A_{i}}s(a)=B$, where each $A_{i}$ contain exactly three
elements from $A$. It means we can obtained proper 3 mobile sensors
to sweep cover each $B_{j}$ for $1\leq j\leq m$., then we get a
proper deployment for PSCP problem.

Conversely, for the other side, if we have an unique solution for
PSCP problem, because of equation (2), mobile sensors $s_{i}$ $(1\leq i\leq M)$
must cover some part of line segment $B_{j}$ $(1\leq j\leq m)$ since
the gap between $B_{j}$ and the other is too big. According to theorem
4, because of $B/4<r_{i}<B/2$ for $1\leq j\leq m$, more than 2 sensors
would be needed to cover each segment $B_{j}$ $(1\leq j\leq m)$
. Considering there are 3m sensors for m line segment $B_{j}(1\leq j\leq m)$,
we get exactly 3 mobile sensors to cover each segment $B_{j}$ $(1\leq j\leq m)$
. According to theorem 4, the strategy is separated strategy. For
$B_{j}$ $(1\leq j\leq m)$, assuming covering ranges of the 3 mobile
sensor are $r_{j1},r_{j2},r_{j3}$ respectively, we get 
\[
r_{j1}+r_{j2}+r_{j3}\geq B-2B/(2B+1)
\]

And because $r_{j1},r_{j2},r_{j3}\in Z^{+}$, we get $r_{j1}+r_{j2}+r_{j3}\geq B$.
If $r_{j1}+r_{j2}+r_{j3}>B$, there must exists $r_{i1}+r_{i2}+r_{i3}<B(i\neq j)$,
making it contradiction. So $r_{j1}+r_{j2}+r_{j3}=B$ for $1\leq j\leq m$
. 3-Partition problem is satisfied.

It is clear that the instance of PSCP can be constructed from arbitrary
3-Partition instance in polynomial time of $N$. An example is illustrated
in figure 2.

\begin{figure}
\caption{The instance of PSCP when $m=3$, $B=7$. It contains 48 PoIs and
9 sensors. The PoIs are distributed on subsegment $B_{j}$ $(1\le j\le3)$.
When the covering range of sensor $s_{i}$ is $\frac{7}{4}\le r_{i}\le\frac{7}{2}$
$(1\le i\le9)$, the proper deployment is to deploy 3 sensors to each
$B_{j}$.}

\includegraphics[scale=0.5]{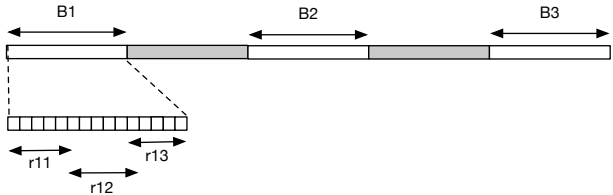} 
\end{figure}

It is easy to see that $PSCP\in NP$, since it can be checked in polynomial
time of $N$ whether the given set of mobile sensors is enough to
sweep cover the PoIs or not.

The theorem is proved. \end{proof}

In this process of the above proof, we would find that, considering
cooperated strategy or not, PSCP is NP-complete problem.

\section{Max-Weight sweep coverage on path problem }

We promote an optimization problem for PSCP problem, called Max-PoI
Sweep Coverage on Path (MPSCP) problem, which is a special case of
Max-Weight sweep coverage on path problem.
\begin{defn}
(Max-PoI Sweep Coverage on Path Problem) The Max-PoI Sweep Coverage
on Path problem is to find the maximum number of PoIs sweep covered
on path by the given set of mobile sensors. 
\end{defn}
In the existing work, there are 2 kinds of optimization problems on
point sweep coverage problem proposed. One is Min-sensor Sweep Coverage
problem to find minimum number of mobile sensors to sweep cover all
the PoIs, the other is Min-idleness sweep coverage problem to minimize
maximum time period of the PoIs. However, in applications, resources
are always limited. We want to decide whether the given set of mobile
sensors can sweep cover the PoIs and if it is not enough, people would
like to cover as many PoIs as possible. Max-PoI Sweep Coverage (MPSC)
problem is to find the maximum number of PoIs sweep covered by the
given mobile sensors. It would maximize utilization efficiency of
the given mobile sensors. In theory, MPSC can be used as a decision
algorithm for point sweep coverage problem. When the output is $N$,
it means all PoIs can be sweep covered by the given set of mobile
sensors. We can also use the algorithm for MPSC as a subroutine to
fix the Min-sensor sweep coverage problem and the Min-idleness sweep
coverage. So it is essential to study MPSC problem.

In this paper, we focus on the point sweep coverage problem on path,
so MPSCP would be the our main content. Based on the NPC of PSCP problem,
MPSCP problem is a NP-Hard problem.

When we introduce PSCP problem before, we assume the PoIs are on different
x-coordinations. Actually, sometimes, some PoIs are on the same x-coordination;
or sometimes, PoIs would have different weight depending on their
locations' importance. MPSCP problem would be generalized to its weighted
case. We call the weighted case Max-Weight Sweep Coverage on Path
(MWSCP) Problem. The object is to find the maximum sum of the weight
of PoIs sweep covered, where each of PoIs would have a weight respectively
and the weight is allowed to be integer and fractional number. The
definition of MWSCP is below. 
\begin{defn}
(Max-Weight Sweep Coverage on Path Problem) Given a set of PoIs on
a path and a set of mobile sensors, each of which has a weight respectively,
the Max-Weight Sweep Coverage on Path problem is to find the maximum
sum of weight of PoIs sweep covered on path by the given set of mobile
sensors. 
\end{defn}
MWSCP is also NP-Hard problem. In the following text, we would fix
two kinds of special cases of MWSCP and its general case. We present
an optimal algorithm for MWSCP when mobile sensors have uniform velocity
and present a $\text{\ensuremath{\frac{1}{2}}}$-approximation algorithm
for the case when mobile sensors have constant kinds of velocities.
For the general case, we present a $\frac{1}{2\alpha}$-approximation
$(\alpha~is~an~integer,\alpha\ge2)$ algorithm family scheme using
rounding method and improve the approximation ratio to $\frac{1}{2}(1-1/e)$
using randomized algorithm.

If the number of PoIs is smaller than or equal to the number of mobile
sensors, the problem of MWSCP is trivial. In the following text, we
assume the number of PoIs is bigger than the number of mobile sensors,
i.e., $N>M$. Static sensors are regarded as special mobile sensors
with velocity 0.

\subsection{Max-Weight sweep coverage on path with uniform velocity}

In this subsection, we show there is a polynomial algorithm for optimal
solution for MWSCP when sensors have uniform velocity.

From the paper\cite{pasqualetti2012cooperative}, we know separated
strategy is optimal when PoIs on path are sweep covered by mobile
sensors with uniform velocity, because when two sensors meet while
moving in opposite directions they can ``exchange'' roles. 
\begin{lem}
Separated strategy is optimal strategy for point sweep coverage on
path problem if the given set of mobile sensors have uniform velocity. 
\end{lem}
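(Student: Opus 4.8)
The plan is to make precise the ``role-exchange'' idea quoted above: since every sensor has the same maximal speed $v$, two sensors that meet are interchangeable, so any (possibly cooperative) schedule can be rearranged into one in which the sensors never pass one another, and such a non-crossing schedule is exactly a separated one. I would model an arbitrary strategy by the position functions $g_{1},\dots,g_{M}$ of the $M$ sensors, each mapping time into the interval $[x_{1},x_{N}]$; each $g_{i}$ is $v$-Lipschitz because $v$ is the common bound on speed, and a PoI at location $x$ is $T$-sweep covered precisely when some sensor occupies $x$ at least once in every time window of length $T$. It then suffices to show that the set of PoIs covered by $(g_{1},\dots,g_{M})$ is also covered by some separated deployment, since this forces the optimum over all strategies to be attained by a separated one (and the same conclusion then transfers to the weighted objective of MWSCP).

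The central step replaces the informal pairwise exchange by passing to order statistics. I define $\hat g_{(k)}(t)$ to be the $k$-th smallest of the numbers $g_{1}(t),\dots,g_{M}(t)$, so that $\hat g_{(1)}(t)\le\cdots\le\hat g_{(M)}(t)$ for every $t$. Because order statistics are $1$-Lipschitz with respect to the sup-norm of the underlying vector, each $\hat g_{(k)}$ is again continuous and $v$-Lipschitz, hence a legal sensor trajectory; and at each instant the multiset of occupied positions is unchanged, so every PoI is visited at exactly the same times as before and coverage is preserved verbatim. By construction the sorted trajectories never cross. This is the rigorous form of ``when two sensors meet moving in opposite directions they exchange roles,'' and it sidesteps having to track or bound the number of crossings.

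It remains to turn a non-crossing schedule into a separated one. I would fix the $k$-th sorted sensor and let $a_{k}<b_{k}$ be the leftmost and rightmost PoIs it covers, and claim $b_{k}-a_{k}\le vT/2$. Pick a time $t'$ with $\hat g_{(k)}(t')=b_{k}$, let $t$ be the last visit to $a_{k}$ before $t'$ and $t''$ the first visit to $a_{k}$ after $t'$. Then $t$ and $t''$ are consecutive visits to $a_{k}$, so $t''-t\le T$ by coverage, while on $[t,t'']$ the sensor performs the round trip $a_{k}\to b_{k}\to a_{k}$ of length at least $2(b_{k}-a_{k})$; the Lipschitz bound gives $2(b_{k}-a_{k})\le v(t''-t)\le vT$. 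Thus each sorted sensor covers PoIs lying in a window of length at most $vT/2$, the windows are ordered left to right, and assigning to each sensor one back-and-forth patrol of its window yields a separated deployment covering the same PoIs.

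I expect the main obstacle to be the final transition from ``each sensor covers PoIs inside a length-$vT/2$ window'' to an honest separated deployment with pairwise disjoint trajectories: adjacent sorted sensors may have overlapping ranges, so one must argue that the ordered blocks of covered PoIs can be re-partitioned into disjoint segments, each of length at most $vT/2$, without dropping any covered PoI. The order-statistic step and the round-trip length bound are otherwise self-contained; the only genuine care lies in this disjointification and in confirming that a single back-and-forth sensor indeed $T$-covers every PoI within a segment of length at most its range, which is exactly the optimal-deployment fact $r_{i}=v_{i}T/2$ recorded earlier.
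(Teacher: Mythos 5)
Your order-statistics step is sound and is a faithful formalization of the ``role exchange'' remark; note for the record that the paper itself never proves this lemma (it imports it from the cited work on cooperative patrolling), so you are attempting strictly more than the paper does. The genuine gap comes in your next step. When you take $a_{k},b_{k}$ to be the leftmost and rightmost PoIs the $k$-th sorted sensor ``covers'' and then invoke ``$t''-t\le T$ by coverage,'' you are tacitly assuming that every covered PoI is covered \emph{individually} by some single sensor, i.e.\ that one sensor visits it in every window of length $T$. But sweep coverage is a team property, and sorting does not eliminate sharing: two adjacent non-crossing trajectories can touch a common point and visit it at interleaved times. Concretely, with one PoI at $x$, let $\hat g_{(1)}$ oscillate in $[x-vT,x]$ and $\hat g_{(2)}$ oscillate in $[x,x+vT]$, both with period $2T$, phased so that $x$ is visited at times $0,T,2T,3T,\dots$, alternately by the two sensors. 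The PoI at $x$ is $T$-sweep covered, yet neither sensor visits it more often than every $2T$, so $x$ lies in no window $[a_{k},b_{k}]$ and your construction drops it. Hence your reduction only shows that a separated deployment can match the weight of the \emph{individually} covered PoIs, which can be strictly smaller than what the team schedule covers, and the inequality you need does not follow.

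The missing ingredient is an argument that handles jointly covered PoIs. A standard repair: greedily partition the team-covered PoIs from left to right into maximal blocks of diameter at most $vT/2$; if there are at most $M$ blocks you are done, and otherwise the first points of $M+1$ blocks have consecutive gaps exceeding $vT/2$, so it suffices to show that $M$ sensors of speed $v$ cannot even \emph{team}-cover $M+1$ points with consecutive gaps greater than $vT/2$. Prove this by induction after clamping all sorted trajectories into the interval spanned by those points: clamping preserves the Lipschitz bound, preserves the ordering, and only adds visits; then whenever anyone visits the leftmost point, the lowest sensor must be at it, so the lowest sensor covers it individually, your round-trip bound confines that sensor within distance $vT/2$ of it, and the remaining $M-1$ sensors must team-cover the other $M$ points, closing the induction. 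Finally, the difficulty you do flag---disjointifying overlapping windows---is actually the easy part: each window has length at most $vT/2$, a back-and-forth patrol of a window covers every PoI inside it, so overlaps can simply be trimmed from the right-hand window without losing any PoI.
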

Using separated strategy, every sensor has its own covering region.
From observation, we find MWSCP using separated strategy problem has
the nature of optimal substructure. So we use dynamic programming
to get the optimal solution. 
\begin{fact}
Dynamic programming would be an optimal algorithm to solve MWSCP using
separated strategy problem since it has the nature of optimal substructure. 
\end{fact}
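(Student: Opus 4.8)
The plan is to turn the separated-strategy version of MWSCP with uniform velocity into a clean one-dimensional selection problem and then justify a left-to-right recurrence. First I would record the consequence of uniform velocity: every sensor's covering region is an interval of the \emph{same} length $r=vT/2$, and the separated requirement that trajectories not overlap forces the $M$ chosen intervals to be pairwise disjoint. Hence, by Lemma~6, solving MWSCP is equivalent to the geometric placement problem of choosing at most $M$ disjoint length-$r$ intervals on the $x$-line so as to maximize the total weight $\sum w_j$ of the PoIs contained in their union. Because only the \emph{set} of covered PoIs enters the objective, the continuous positions of the intervals are irrelevant beyond which points they capture.

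The key structural step is a canonical-form lemma that discretizes the candidate intervals. I would argue that there is an optimal placement in which every used interval is \emph{anchored} at a covered PoI, i.e.\ its left endpoint sits at some $x_i$ and it is extended maximally to $x_i+r$. The justification is the standard greedy for covering points by unit-length intervals: scanning the selected PoIs from left to right and, whenever an uncovered selected point $p_i$ is met, placing the window $[x_i,x_i+r]$, uses no more windows than any disjoint placement covering the same point set, while capturing a superset of those points at no extra budget. This simultaneously keeps the number of windows at most $M$ and makes the canonical windows automatically disjoint, since the next anchor lies strictly beyond $x_i+r$. Thus restricting attention to anchored, maximal windows loses nothing, and the anchors range over the $N$ PoI positions.

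With this discretization I would define, for each index $i$, the captured weight $W_i=\sum_{x_i\le x_j\le x_i+r} w_j$ and $\mathrm{nxt}(i)$ as the smallest index $j$ with $x_j>x_i+r$. Let $D(i,k)$ be the maximum weight obtainable with $k$ sensors using only $p_i,\dots,p_N$. The recurrence is $D(i,k)=\max\{\,D(i+1,k),\; W_i+D(\mathrm{nxt}(i),k-1)\,\}$, with base cases $D(i,0)=0$ and $D(N+1,k)=0$, and the answer is $D(1,M)$. Correctness then rests on the optimal-substructure property: in any optimal solution, examine the leftmost remaining PoI $p_i$; either it is left uncovered, so the rest is an optimal solution counted by $D(i+1,k)$, or it is covered, in which case the canonical-form lemma lets us assume it is the anchor of a maximal window of weight $W_i$, and the other $k-1$ windows lie entirely to the right of $x_i+r$ and form an optimal solution counted by $D(\mathrm{nxt}(i),k-1)$.

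The step I expect to demand the most care is exactly the exchange argument of the second paragraph: showing that replacing an arbitrary disjoint placement by anchored maximal windows neither increases the sensor count nor decreases coverage, despite the fact that naive rightward sliding of a continuous interval could in principle collide with its right neighbor. Phrasing the replacement through the greedy point-cover (whose optimality in the number of intervals is classical) sidesteps the collision issue, because the recurrence's jump to $\mathrm{nxt}(i)$ enforces disjointness by construction. Once substructure is in hand, correctness follows by induction on the pair $(i,k)$, and the running time is $O(NM)$ after an $O(N\log N)$ sort of the positions and an $O(N)$ preprocessing pass computing the prefix sums behind $W_i$ and the values $\mathrm{nxt}(i)$.
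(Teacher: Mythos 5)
Your proposal is correct and takes essentially the same approach as the paper: your recurrence $D(i,k)=\max\{D(i+1,k),\,W_i+D(\mathrm{nxt}(i),k-1)\}$ is exactly the paper's $OPT(i,j)=\max\{OPT(i,j+1),\,OPT(i-1,j+n_j)+w_j\}$ with the roles of the two indices exchanged, since the paper's $n_j$ and $w_j$ coincide with your $\mathrm{nxt}(j)$ and $W_j$. The only difference is that the paper asserts optimal substructure ``from observation'' without argument, while you additionally prove the anchoring (canonical-form) lemma via the greedy point-cover exchange, which is precisely the justification needed to restrict window starts to PoI positions and make the recurrence sound.
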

In the case of uniform velocity, we set the uniform velocity of sensors
is $v$, then sensors have uniform covering range $r=vT/2$.

Let $OPT(i,j)$ denote the maximum sum of weight of PoIs covered by
$i$ sensors from $j^{th}$ PoIs, $n_{j}$ denote the biggest number
of PoIs covered by one sensor from $j^{th}$ PoI. $w_{j}$ is the
sum of weight of the $n_{j}$ PoIs, i.e., PoIs from $j^{th}$ to $(j+n_{j}-1)^{th}$.
The recursive formulation is given below:

\begin{multline*}
OPT(i,j)=\\
\begin{array}{c}
\begin{array}{cc}
max\left\{ \begin{array}{c}
OPT(i,j+1),\\
OPT(i-1,j+n_{j})+w_{_{j}}
\end{array}\right\}  & \begin{array}{c}
1\leq i\leq M,\\
1\leq j<N-1
\end{array}\end{array}\end{array}
\end{multline*}

And boundary conditions are

\begin{align*}
OPT(0,j) & =0 & 0\leq j\leq N\\
OPT(i,N) & =1 & 1\leq i\leq M
\end{align*}

The time complexity of the dynamic program is $O(M*N)$ . Since the
algorithm is straightforward so we omit it. The dynamic programming
algorithm is an optimal algorithm for MWSCP when sensors have uniform
velocity.

The PSCP problem when sensors have uniform velocity can also be judged
satisfied or not by making the sensors to cover from $1^{th}$ PoI
to $N^{th}$ continuously without overlapping.

\subsection{Max-Weight sweep coverage on path with constant number of velocities}

Now we discuss MWSCP problem when sensors have $K$ different velocities,
where $K$ is a constant. In fact, it is not hard to see that the
uniform velocity case of MWSCP problem is a special case of MWSCP
problem when sensors have constant kinds of velocities. The only difference
is that separated strategy is not an optimal strategy any more when
$K\ge2$ and $M\ge4$\cite{kawamura2015fence}. So we use dynamic
programming method to find an optimal separated strategy and propose
a $\frac{1}{2}$-approximation algorithm for MWSCP problem with constant
kinds of velocities considering the difference between separated strategy
and cooperated strategy.

Let $m_{i}$ be the number of sensors with velocity $v_{i}$, then
$M=\sum\nolimits _{i=1}^{K}m_{i}$. $OPT_{S}(i_{1},i_{2},...,i_{K},j)$
denotes the maximum sum of weight of PoIs covered when there are ${\sum\nolimits _{h=1}^{K}i_{h}}$
sensors to cover the line segment from $j^{th}$ PoI to $N^{th}$
PoI using separated strategy, where $i_{h}$ is the number of sensors
with velocity $v_{j}$ $(1\leq h\leq K)$. $n_{ij}$ denotes the maximum
number of PoIs covered by a sensor with velocity $v_{i}$ covering
from $j^{th}$ PoI and $w_{ij}$ is the sum of weight of the $n_{ij}$
PoIs. The recursive formulation lists below: for $1\leq i_{h}\leq M,1\leq h\leq K,1\leq j\leq N$,

\begin{multline*}
OPT_{S}(i_{1},i_{2},...,i_{K},j)=\\
max\left\{ \begin{array}{c}
OPT_{S}(i_{1},i_{2},...,i_{K},j+1),\\
OPT_{S}(i_{1}-1,i_{2},...,i_{K},j+n_{1j})+w_{1j},\\
OPT_{S}(i_{1},i_{2}-1,...,i_{K},j+n_{2j})+w_{2j},\\
...,\\
OPT_{S}(i_{1},i_{2},...,i_{K}-1,j+n_{Kj})+w_{kj}
\end{array}\right\} 
\end{multline*}

And boundary conditions are

\begin{alignat*}{1}
OPT_{S} & (0,0,...,0,j)=0\\
 & \qquad\qquad\qquad\hspace{1.5em}0\leq j\leq N\\
OPT_{S} & (i_{1},i_{2},...,i_{K},N)=1\\
 & 0\leq i_{h}\leq M,1\leq h\leq K,{\sum\nolimits _{h=1}^{K}i_{h}}\ge1
\end{alignat*}

In Algorithm 1, we maintain an array entry $Tr(j)$ for $j=1,...,K$.
Each entry $Tr(j)$ is a list of integers. A integer $i$ in the list
of entry $Tr(j)$ indicates that $i^{th}$ PoI is the first PoI in
the covering region of some sensor with velocity $v_{j}$, whose covering
range is $v_{j}T/2$.

\begin{algorithm}[t]
\SetAlgoNoLine \KwIn{ The number of sensors $M$, the number of
PoIs $N$, sensors' velocities $V=\{v_{1},v_{2},...,v_{M}\}$, PoIs'
weight $W=\{\omega_{1},\omega_{2},...,\omega_{N}\}$, the number of
velocities $K$, sweep period $T$, PoIs' locations $X=\{x_{1},x_{2},...,x_{N}\}$,
an array entry $Tr(i)$ for $1\leq i\leq K$.} \KwOut{The maximum
sum of weight of PoIs covered: $O(m_{1},m_{2},...,m_{K},1)$} count
the numbers of sensors with different velocities respectively $\{m_{1},m_{2},...,m_{K}\}$\;

set $W[K][N]\leftarrow0$, $Tr[K]\leftarrow0$\;

\For{ $i\leftarrow1$ to $K$ }{ \For{ $j\leftarrow N$ to 1
}{ Let $N[i][j]$ be the biggest number of PoIs covered by one sensor
with velocity $v_{i}$ from $j^{th}$ PoI in sweep period $T$, i.e.,
$w_{ij}$\; Let $Wt[i][j]$ be the sum of weight of the $N[i][j]$
PoIs covered.\;}}

set $O[m_{1}+1][m_{2}+1]...[m_{K}+1][N]\leftarrow0$;\; initial $O[m_{1}+1][m_{2}+1]...[m_{K}+1][N]$
according to boundary conditions\;

\For{ $j\leftarrow N$ to 1}{

\For{ $i_{1}\leftarrow1$ to $m_{1}$ }{

\For{ $i_{2}\leftarrow1$ to $m_{2}$ }{

\For{...}{

\For {$i_{K}\leftarrow1$ to $m_{K}$ }{ call the recursive formulation
to calculate $O(i_{1},i_{2},...,i_{K},j)$\;

}}}}}

set $i_{1}=m_{1},i_{2}=m_{2},...,i_{K}=m_{K}$;\; \For{ $j\leftarrow1$
to N-1 }{

\If {$O(i_{1},i_{2},...,i_{K},j)!=O(i_{1},i_{2},...,i_{K},j+1)$
}{ \For{ $h\leftarrow1$ to $K$ }{ \If {$i_{h}>0$ $\&\&$
$O(i_{1},i_{2},...,i_{h},...,i_{K},j)==O(i_{1},i_{2},...,i_{h}-1,...,i_{K},j+W[i_{h}][j])+W[i_{h}][j]$
}{ add $j$ to Tr(h); $j-=W[i_{h}][j]$\; $i_{h}--$\; }}}}

\For{ $h\leftarrow1$ to $K$ }{

\If{ exist $i_{h}!=0$ }{

add $j$ to Tr(h); break\;

}}

\For{ $1\leq i\leq K$ }{ place sensors with velocity $v_{i}$
to the locations $Tr(i)$ respectively\; } \Return $O(m_{1},m_{2},...,m_{K},1)$;

\caption{MWSCP-K-Velocities}
\label{alg:one} 
\end{algorithm}

It will take $N*\prod_{i}m_{i}$ time to fix dynamic program in Algorithm
1 and take $N*K$ time to trace back to get the optimal deployment.
Note that, $N*\prod_{i}m_{i}\le N*(M/K)^{K}$, so Algorithm 3 will
take $O(N*(M/K)^{K})$ time. Because $K$ is a constant integer, the
algorithm is polynomial time algorithm. Now we want to prove that
the optimal algorithm for MWSCP using separated strategy is a $\frac{1}{2}$-approximation
algorithm for MWSCP. Before that, we prove a more generalized theorem. 
\begin{thm}
A $\beta$-approximation algorithm for MWSCP problem using separated
strategy can be turned to be a $\frac{\beta}{2}$-approximation for
MWSCP. 
\end{thm}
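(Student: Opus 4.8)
The plan is to reduce the statement to a single structural inequality relating the two optima. Write $OPT$ for the optimal weight of an unrestricted (cooperated) MWSCP solution and $OPT_S$ for the optimal weight achievable by a separated strategy on the same instance. If the algorithm in hand returns a separated solution of weight at least $\beta\cdot OPT_S$, then it suffices to prove the key inequality $OPT_S\ge \tfrac12 OPT$: chaining the two bounds immediately gives an output weight of at least $\beta\cdot OPT_S\ge\frac{\beta}{2}OPT$, which is exactly the claimed $\frac{\beta}{2}$-approximation for MWSCP. So the entire burden falls on comparing the best separated strategy against the best arbitrary (cooperated) strategy.

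To prove $OPT_S\ge\tfrac12 OPT$, I would fix an optimal cooperated solution and examine a single sweep window $[0,T]$. Because each sensor $s_i$ moves continuously with speed at most $v_i$, the set of positions it occupies during $[0,T]$ is an interval $I_i$ of length at most $v_iT$. Every PoI that is $T$-sweep covered must be visited during this window, so all covered PoIs lie in $U=\bigcup_i I_i$. I would then decompose $U$ into its maximal intervals $J_1,\dots,J_q$ and set $S_k=\{i: I_i\subseteq J_k\}$; since each $I_i$ is itself a (connected) interval, it lies in exactly one component $J_k$, so the $S_k$ partition the sensors and $|J_k|\le\sum_{i\in S_k}|I_i|\le T\sum_{i\in S_k}v_i$. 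Recalling that a separated sensor of velocity $v_i$ has range $r_i=v_iT/2$, this says the separated range available on $J_k$ satisfies $\sum_{i\in S_k}r_i\ge \tfrac12|J_k|$.

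The step I expect to be the real obstacle is passing from this \emph{length} bound to a \emph{weight} bound, since covering half the length of $J_k$ need not capture half of its weight when PoIs are distributed unevenly. I would resolve this with a ``heavier half'' argument: split $J_k$ at its midpoint into two contiguous halves, each of length $|J_k|/2\le\sum_{i\in S_k}r_i$. A single contiguous interval whose length is at most the total available range can always be tiled by the sensors of $S_k$ in separated mode (lay them end to end until their ranges cover it), so $S_k$ can fully cover whichever half carries the larger PoI weight, and that half carries at least half of the weight in $J_k$. Carrying this out independently on every $J_k$ uses pairwise disjoint sensor sets $S_k$ over pairwise disjoint regions $J_k$, so the union is a globally valid separated strategy, and its total weight is at least $\tfrac12\sum_k \mathrm{wt}(J_k)\ge\tfrac12 OPT$ because every covered PoI lies in some $J_k$. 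This yields $OPT_S\ge\tfrac12 OPT$ and closes the argument. The only nontrivial facts I rely on are the continuity/speed bound $|I_i|\le v_iT$ and the elementary tiling of a contiguous interval by separated sensors; the rest is bookkeeping.
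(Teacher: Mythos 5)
Your proof is correct, but it reaches the theorem by a genuinely different route than the paper's. The paper argues by decomposing the optimal (possibly cooperated) deployment into line segments, treating separately the segments served by separated strategy (where the $\beta$-approximation applies directly) and those served cooperatively; for the latter it claims, via contradiction, that the optimal separated strategy on such a segment captures at least half of its cooperated weight, since otherwise the sensors could instead cover the ``uncovered part,'' which would hold more weight. You instead isolate the single global inequality $OPT_{S}\ge\frac{1}{2}OPT$ and chain it with the hypothesis, and you establish that inequality from first principles: the positions occupied by each sensor during one window $[0,T]$ form an interval $I_{i}$ with $|I_{i}|\le v_{i}T$, the connected components $J_{k}$ of $\bigcup_{i}I_{i}$ induce a partition of the sensors with $|J_{k}|\le2\sum_{i\in S_{k}}r_{i}$, and the heavier of the two midpoint-halves of $J_{k}$ (a contiguous interval of length $|J_{k}|/2\le\sum_{i\in S_{k}}r_{i}$) can be tiled end to end by the sensors of $S_{k}$. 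Two features distinguish your argument favorably. First, it does not presuppose that the cooperated optimum decomposes into segments each served by its own group of sensors; you derive the decomposition from the motion itself, which makes the proof self-contained and valid for arbitrary cooperated trajectories. Second, your midpoint-half device repairs a subtle weakness in the paper's contradiction step: the uncovered part of a segment there need not be contiguous, and a disconnected region of total length at most $\sum r_{i}$ cannot always be fully covered by laying sensors end to end (range is wasted whenever a component ends mid-sensor), whereas a single contiguous half always can. The paper's version is shorter and stays within its own vocabulary of strategies; yours is more rigorous and buys generality at the cost of a somewhat longer setup.
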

\begin{IEEEproof}
Let $A$ denote the $\beta$-approximation algorithm for MWSCP problem
using separated strategy, and $A_{O}$ denote the optimal algorithm.
In resulting deployment of algorithm $A_{O}$, there may be some line
segments covered by cooperated strategy and others covered by separated
strategy.

Case 1: For the line segments covered by separated strategy in Algorithm
$A_{O}$, the sum of weight of PoIs covered in these segments is $OPT_{1}$,
which is equal to the sum of weight of PoIs covered by optimal separated
strategy. If we sweep cover the same segment with its covering sensors
using algorithm $A$, the sum of weight of PoIs covered is $W_{1}\ge\beta*OPT_{1}$.

Case 2: For the line segments covered by cooperated strategy, we assume
the optimal sum of weight of PoIs covered is $OPT_{2}$. If we cover
the same segment with its covering sensors using optimal separated
strategy, we can cover $OPT_{S2}\ge OPT_{2}/2$ PoIs. For example,
$l\subseteq L$ is one of the line segments covered by a subset of
sensors $S'\subseteq S$ with velocities $\left\{ v_{1},v_{2},...,v_{n}\right\} $
$(n=|S'|)$ using cooperated strategy, in which the sum of weight
of PoIs covered is $W_{C}$, $|l|<\sum\nolimits _{i=1}^{n}{v_{i}}$.
If we cover $l$ by optimal separated strategy, the covering range
of sensors is $|{r}(S')|=\sum\nolimits _{i=1}^{n}{{v_{i}}/2}$ and
the sum of weight of PoIs covered is $W_{S}$ . If $W_{S}<W_{C}/2$,
the uncovered part of segment $l$ whose length is less than $|l|/2$,
which can also be covered by $S^{'}$ containing more weight, then
$W_{S}$ is not an optimal solution of separated strategy on segment
$l$ . It reduces a contradiction. So $W_{S}\ge W_{C}/2$. It reduces
$OPT_{S2}\ge OPT_{2}/2$. Let $W_{2}$ is the sum of weight of PoIs
covered by algorithm $A$. We reduce $W_{2}\ge\beta*OPT_{S2}\ge\frac{\text{\ensuremath{\beta}}}{2}*OPT_{2}$.

We get the solution of algorithm $A$ is $W_{1}+W_{2}\ge\frac{\beta}{2}*(OPT_{1}+OPT_{2})$.

The theorem is proved. 
\end{IEEEproof}
\begin{thm}
Algorithm 3 is an $\frac{1}{2}$-approximation algorithm for MWSCP
problem. 
\end{thm}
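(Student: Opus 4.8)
The plan is to combine two ingredients that are already in place: first, that Algorithm 3 computes an \emph{exactly optimal} separated-strategy deployment, and second, the preceding theorem, which converts any $\beta$-approximation for the separated-strategy version of MWSCP into a $\frac{\beta}{2}$-approximation for MWSCP. Taking $\beta=1$ then yields the claimed $\frac{1}{2}$ ratio immediately, so the entire argument reduces to verifying optimality of the dynamic program underlying Algorithm 3.

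First I would argue that the recursion defining $OPT_S(i_1,\dots,i_K,j)$ correctly solves MWSCP restricted to separated strategies. Under a separated strategy each sensor covers a contiguous block of PoIs lying within its range $v_iT/2$, and distinct sensors' blocks are disjoint; hence every deployment can be generated by scanning the PoIs left to right and, at each PoI $j$, either leaving $j$ as the start of no block (the term $OPT_S(\dots,j+1)$) or starting the block of some still-available velocity-$v_h$ sensor exactly at $j$ (the terms $OPT_S(\dots,i_h-1,\dots,j+n_{hj})+w_{hj}$). This case split is exhaustive, which together with the optimal-substructure observation (the Fact stated above) gives Bellman optimality of the recursion; the boundary conditions and the fill order specified in Algorithm 3 then guarantee that the returned value $O(m_1,\dots,m_K,1)$ equals the optimal separated value.

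The step I expect to be the main obstacle is justifying that a sensor may be assumed to cover the \emph{maximal} number $n_{hj}$ of PoIs starting at $j$, rather than some shorter prefix, which is precisely what the recursion hard-codes. This needs an exchange argument: because all weights are nonnegative and the blocks are disjoint, extending the block that begins at $j$ to its full range $v_hT/2$ never decreases the covered weight while only pushing the starting indices of the subsequent blocks to the right. Replacing any block that stops short by its maximal extension therefore preserves feasibility and cannot lower the objective, so every separated deployment is dominated by one of the restricted form the DP enumerates, and the DP value is genuinely optimal.

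Finally I would close the argument by instantiating the preceding theorem. Since Algorithm 3 returns the optimum of MWSCP under separated strategy, it is a $\beta$-approximation for that restricted problem with $\beta=1$; applying the reduction of the preceding theorem gives a $\frac{\beta}{2}=\frac{1}{2}$-approximation for MWSCP, which is exactly the assertion. The only remaining bookkeeping is to confirm that the trace-back phase of Algorithm 3 actually realizes the value $O(m_1,\dots,m_K,1)$ as a concrete placement of sensors at the recorded locations $Tr(\cdot)$, so that the ratio is attained constructively and not merely in value.
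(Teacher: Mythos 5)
Your proposal is correct and follows essentially the same route as the paper: the paper's proof simply asserts that the dynamic program is optimal for MWSCP under separated strategy and then invokes the preceding theorem with $\beta=1$ to obtain the $\frac{1}{2}$ ratio. Your additional exchange argument justifying that each sensor may be assumed to cover the maximal block $n_{hj}$ is a sound elaboration of what the paper leaves implicit in its optimal-substructure Fact, but it does not change the structure of the argument.
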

\begin{IEEEproof}
Algorithm 3 is an optimal dynamic programming algorithm for MWSCP
problem using separated strategy. According to Theorem 9, Algorithm
3 is an $\frac{1}{2}$-approximation algorithm for MWSCP problem. 
\end{IEEEproof}

\subsection{Max-Weight sweep coverage on path for general cases}

In this subsection, we discuss MWSCP for general cases, i.e. when
sensors have arbitrary velocities. We use 2 methods to solve the general
case of MWSCP problem. One is to use rounding and dynamic programming
method, by which we can get a $\frac{1}{2\alpha}$-approximation algorithm
scheme, where integer $\alpha\ge2$. The other is to use randomized
method, by which we can get a randomized $\text{\ensuremath{\frac{1}{2}}(1-1/e)}$-approximation
algorithm. By conditional expectation method, we can get a deterministic
$\text{\ensuremath{\frac{1}{2}}(1-1/e)}$-approximation algorithm
by derandomizing the randomized algorithm.

\subsubsection{Rounding and dynamic programming method for MWSCP}

Given a set of sensors $S$ with velocities $V$, we round the velocities
$V$ to $V^{'}$, make it a new set of sensors input $S^{'}$. Then
we quote Algorithm 1 for sensors $S^{'}$ to cover the given PoIs.
The algorithm 2 is shown below. In the rounding step, let $v_{d}=max\left\{ min~d_{min~\alpha},v_{l}\right\} $,
where $v_{l}$ denote the lowest velocity except $v_{i}=0$ and $d_{min~\alpha}$
denote the minimum distance between every continuous $\alpha+1$ PoIs.
For the sensors in $S$ with velocities lower than $v_{d}$, the velocities
will be set to zero. And for the sensors with velocities no lower
than $v_{d}$, we set $v_{i}^{'}=v_{d}*\alpha^{\lfloor log_{\alpha}(v_{i}/v_{d})\rfloor}$
. That is,

\[
\begin{cases}
v_{i}^{'}=0 & (v_{i}<v_{d},v_{i}\in V)\\
v_{i}^{'}=v_{d}*2^{\lfloor log_{\alpha}(v_{i}/v_{d})\rfloor} & (v_{i}\ge v_{d},v_{i}\in V)
\end{cases}
\]

Then the sensors would be rounded to $K=\lceil log_{\alpha}(v_{s}/v_{d})\rceil+1$
groups, where $v_{s}=max\{v_{i}|1\leq i\leq M\}$.

\begin{algorithm}[t]
\SetAlgoNoLine \KwIn{ The number of sensors $M$, the number of
PoIs $N$, sensors' velocities $V=\{v_{1},v_{2},...,v_{M}\}$, PoIs'
weight $W=\{\omega_{1},\omega_{2},...,\omega_{N}\}$, sweep period
$T$, PoIs' locations $X=\{x_{1},x_{2},...,x_{N}\}$.} \KwOut{The
maximum sum of weight of PoIs sweep covered.}

\For{ $i\leftarrow1$ to $N-2$ }{ $d_{i,i+2}=x_{i+2}-x_{i}$\;}

set $v_{d}=max\{min\{v_{i}|v_{i}\neq0,1\leq i\leq M\},min\{d_{i,i+2}|1\leq i\leq N-2\}\}$\;

set $v_{s}=max\{v_{i}|1\leq i\leq M\}$\;

set $m=\lceil log_{\alpha}(v_{s}/v_{d})\rceil$\;

\For{ $i\leftarrow1$ to $M$ }{

\eIf{ $v_{i}\geq v_{d}$ }{ // For the sensors with velocities
no less than $v_{d}$\; set $\sigma=\lfloor log_{\alpha}(v_{i}/v_{d})\rfloor$
\; set $v_{i}^{'}=\alpha^{\sigma}*v_{d}$ \;}{ //For the sensors
with velocities less than $v_{d}$\; $v<v_{d}$\; $v_{i}^{'}=0$\;
}}

set $V^{'}=\{v_{i}^{'}|1\leq i\leq M\}$, $K=m+1$\;

initial an entry array $Tr(i)$ for $1\leq i\leq K$ \;

apply algorithm 1 to new input ($M$ sensors, $N$ PoIs, sensors'
velocities $V^{'}$, PoIs' weight $W$, the number of kinds of velocities
$K$, sweep period $T$, PoIs' locations $X$, $Tr(i)$ for $1\leq i\leq K$)\;

place the sensors with velocities $v_{i}$ to the locations we get
from algorithm 1 for the sensors with velocity $v_{i}^{'}$ \;

\Return $A_{1}(V^{'})$\caption{MWSCP-General-Cases}
\label{alg:two} 
\end{algorithm}

Now we prove Algorithm 2 is a $\frac{1}{2\alpha}$-approximation algorithm
scheme. 
\begin{thm}
Algorithm 2 is a $\frac{1}{2\alpha}$-approximation algorithm. 
\end{thm}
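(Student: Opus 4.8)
The plan is to reduce the statement to Theorem 9 by proving that Algorithm 2 is a $\frac{1}{\alpha}$-approximation for MWSCP \emph{restricted to the separated strategy}, and then invoking Theorem 9 with $\beta=\frac{1}{\alpha}$. Write $OPT_S(V)$ for the best weight obtainable by a separated strategy with the true velocities and $OPT_S(V')$ for the best obtainable with the rounded velocities. The first observation is that Algorithm 2 merely feeds $V'$ — which has only $K=m+1$ distinct velocities — into Algorithm 1, and Algorithm 1 computes an \emph{optimal} separated strategy for that rounded instance; hence the returned value equals $OPT_S(V')$. The whole theorem then follows from two facts: (i) $OPT_S(V')\ge \frac{1}{\alpha}OPT_S(V)$ on every instance, so Algorithm 2 is a $\frac{1}{\alpha}$-approximation for the separated version, and (ii) Theorem 9 upgrades any $\frac{1}{\alpha}$-approximation-separated algorithm to a $\frac{1}{2\alpha}$-approximation for MWSCP.

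The heart of the argument, and the step I expect to be the main obstacle, is inequality (i). I would prove it constructively: fix an optimal separated strategy $\Sigma$ for $V$, in which each sensor $s_i$ sweeps a segment $I_i$ holding a block of consecutive PoIs, the $I_i$ being pairwise disjoint. From $\Sigma$ I build a \emph{feasible} separated strategy $\Sigma'$ for $V'$ that recovers at least a $\frac{1}{\alpha}$ fraction of the weight of \emph{each} $I_i$. Summing over $i$ and using disjointness then gives $w(\Sigma')\ge\frac{1}{\alpha}w(\Sigma)=\frac{1}{\alpha}OPT_S(V)$, and since $\Sigma'$ is feasible, $OPT_S(V')\ge w(\Sigma')$.

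The construction splits into two cases. For a sensor with $v_i\ge v_d$, the rounding gives $v_i'=v_d\,\alpha^{\lfloor\log_\alpha(v_i/v_d)\rfloor}>v_i/\alpha$, so its covering range drops by a factor strictly below $\alpha$; the block $I_i$, of span at most $r_i=v_iT/2<\alpha r_i'$, can be cut into at most $\alpha$ consecutive sub-blocks each of span at most $r_i'=v_i'T/2$, and I let $s_i$ sweep the heaviest sub-block, which carries at least $w(I_i)/\alpha$. For a sensor with $v_i<v_d$, rounded to velocity $0$, the definition of $v_d$ through $d_{\min\,\alpha}$ is calibrated precisely so that such a sensor's range lies below the minimum span of any $\alpha+1$ consecutive PoIs; hence $I_i$ contains at most $\alpha$ PoIs, and placing the now-static $s_i$ on the heaviest single PoI of $I_i$ again captures at least $w(I_i)/\alpha$. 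In both cases the chosen sub-block lies inside $I_i$, so the trajectories of $\Sigma'$ remain pairwise disjoint and $\Sigma'$ is a legitimate separated strategy.

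The delicate points to pin down are the two counting bounds ``at most $\alpha$ sub-blocks'' and ``at most $\alpha$ PoIs'': the former uses $r_i<\alpha r_i'$ together with tiling $I_i$ by half-open windows of width $r_i'$, and the latter is exactly what the term $d_{\min\,\alpha}$ in $v_d$ is engineered to guarantee, so I will need to make that calibration explicit rather than hand-wave it. Once inequality (i) is established on every instance, Theorem 9 — whose proof re-runs the separated algorithm on arbitrary sub-segments — applies as a black box with $\beta=\frac{1}{\alpha}$ and yields the claimed $\frac{1}{2\alpha}$ bound.
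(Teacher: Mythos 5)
Your proposal is correct and follows essentially the same route as the paper's own proof: both establish $OPT_S(V')\ge\frac{1}{\alpha}OPT_S(V)$ by taking the optimal separated strategy for the true velocities, splitting each sensor's covering region into at most $\alpha$ pieces (pigeonhole for sensors with $v_i\ge v_d$, the ``at most $\alpha$ PoIs'' calibration of $v_d$ for the slow sensors rounded to $0$), keeping the heaviest piece, and then invoking Theorem~9 with $\beta=\frac{1}{\alpha}$. Your write-up is in fact slightly more careful than the paper's, since you explicitly verify feasibility (disjointness of the new trajectories) and flag the $d_{\min\,\alpha}$ calibration that the paper leaves implicit.
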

\begin{IEEEproof}
Given a set of sensors $S$ with velocities $V$, we can get a new
set of sensors $S^{'}$ with rounded velocities $V^{'}$ like algorithm
2. We assume $A_{1}(S)$ is the optimal separated strategy for $S$.
In $A_{1}(S)$ for every sensors $s_{i}\in S$, there is a covering
region $R(s_{i})$. For every sensor $s_{i}$ with velocity $v_{i}$
not lower than $v_{d}$, we set $v_{i}^{'}=v_{d}*\alpha^{\lfloor log_{\alpha}(v_{i}/v_{d})\rfloor}$
, then $v_{i}^{'}\le v_{i}\le\alpha v_{i}^{'}$. The region $R(s_{i})$
can be covered by $\alpha$ copies of the sensor $s_{i}^{'}$ with
velocity $v_{i}^{'}$ for each $s_{i}\in S$. According to Pigeonhole
principle, there must be a covering region $R^{'}(s_{i}^{'})$ covered
by one sensor with velocity $v_{i}^{'}$ having the sum of weight
no less than $w(s_{i})/\alpha$, where $\omega(s_{i})$ is the sum
of weight of PoIs on range $R(s_{i})$. For each sensor $s_{i}\in S$
with velocity lower than $v_{d}$, note that the covering region $R(s_{i})$
can not cover more than $\alpha$ PoIs, so $R(s_{i})$ can be also
covered by $\alpha$ copies of sensor with velocity 0. If we design
a new algorithm, let sensor $s_{i}^{'}\in S^{'}$ cover the the covering
region $R^{'}(s_{i}^{'})$ with no less than $w(s_{i})/\alpha$ weight
when $v_{i}\ge v_{d}$, otherwise, cover the PoI with most weight
in $R(s_{i})$ when $v_{i}<v_{d}$, then we can get a separated deployment
for $S^{'}$, which can obtain the sum of weight no smaller than $A_{1}(S)/\alpha$.

Because $A_{1}(S^{'})$ is an optimal separated strategy for $S^{'}$,
$A_{1}(S^{'})$ can at least get the same weight as the deployment
above.

According to theorem 9, That Algorithm 4 is a $\frac{1}{2\alpha}$-approximation
algorithm is proved. 
\end{IEEEproof}
Note that $\alpha$ is an integer and $\alpha\ge2$. Some counterexamples
show that $\alpha$ can not be fractional number. So we can get the
best performance guarantee is $\frac{1}{4}$-approximation by Algorithm
2. Algorithm 2 quote algorithm 1, so the time complexity is $O(N*(M/K)^{K})$
where $K=\lceil log_{\alpha}(v_{s}/v_{d})\rceil+1$. $\alpha$ is
a tradeoff between time complexity and performance guarantee. In most
cases, the maximum velocity is constant times of minimum nonzero velocity,
so $K$ is a constant to make the time complexity of Algorithm 2 is
acceptable.

.

\subsubsection{Randomized rounding algorithm for MWSCP}

Rounding and dynamic programming method gives a $\frac{1}{2\alpha}$-approximation
algorithm, where integer $\alpha\ge2$. Its best performance guarantee
is $\frac{1}{4}$-approximation. In this subsection, we use randomized
algorithm to analyze MWSCP problem and get an approximation algorithm
with better performance guarantee, $\frac{1}{2}(1-1/e)\approx0.31606$.
Besides, no matter the difference between the maximum velocity and
the minimum velocity, this algorithm takes polynomial time complexity.

Actually, we still analyze MWSCP problem using separated strategy
to get a $(1-1/e)$-approximation algorithm at first, and then get
a $\frac{1}{2}(1-1/e)$-approximation algorithm for MWSCP problem
without eliminating the cooperated strategy. We use the following
integer programming formulation to denote MWSCP using separated strategy
problem, where the variable $z_{l}$ indicates whether the $l^{th}$
PoI is covered, the variable $x_{ij}$ indicates whether sensor i
sweep cover from $j^{th}$ PoI, and $S_{ij}$ indicates the set of
PoIs covered if sensor i sweep covered from $j^{th}$ PoI. $x_{ij}=1$
means the PoIs in $S_{ij}$ are sweep covered. Then the constraint
(4) must hold for each PoI $z_{l}$ since at least one set $S_{ij}$
containing $z_{l}$ is included in solution when $z_{l}=1$ and no
set containing $z_{l}$ is included otherwise. The constraint (5)
means one sensor can be only used once.

\begin{eqnarray}
 & max\sum_{l=1}^{n}\omega_{l}z_{l}\nonumber \\
s.t. & \sum_{(i,j):l\in S_{ij}}x_{ij}\ge z_{l} & \forall l\in N\\
 & \sum_{j}x_{ij}\le1 & \forall i\in M\\
 & x_{ij}\in\{0,1\} & \forall i\in M,j\in N\nonumber \\
 & z_{l}\in\{0,1\} & \forall l\in N\nonumber 
\end{eqnarray}

Then we get the following linear programming relaxation from the integer
program above by replacing the constraints $x_{ij}\in\{0,1\}$ and
$z_{l}\in\{0,1\}$ with $0\le x_{ij}\le1$ and $z_{l}\le1$ .

\begin{eqnarray}
 & max\sum_{l=1}^{n}\omega_{l}z_{l}\nonumber \\
s.t. & \sum_{(i,j):l\in S_{ij}}x_{ij}\ge z_{l} & \forall l\in N\\
 & \sum_{j}x_{ij}\le1 & \forall i\in M\nonumber \\
 & 0\le x_{ij}\le1 & \forall i\in M,j\in N\nonumber \\
 & z_{l}\le1 & \forall l\in N\nonumber 
\end{eqnarray}

Let $(x^{*},z^{*})$ is the optimal solution to the linear program.
We use randomized rounding method to analyze MWSCP problem by making
sensor i to sweep cover from $j^{th}$ PoI with probability $x_{ij}^{*}$
independently. Then we can get a $(1-1/e)$-approximation randomized
algorithm for MWSCP using separated strategy problem.

\begin{algorithm}[t]
\SetAlgoNoLine \KwIn{ The number of sensors $M$, the number of
PoIs $N$, sensors' velocities $V=\{v_{1},v_{2},...,v_{M}\}$, PoIs'
weight $W=\{\omega_{1},\omega_{2},...,\omega_{N}\}$, sweep period
$T$, PoIs' locations $X=\{x_{1},x_{2},...,x_{N}\}$.}\KwOut{The
maximum sum of weight of PoIs covered }

compute an optimal solution $(y^{*},z^{*})$ to the linear program
relaxation.

\For{ $i\leftarrow1$ to $M$ }{ make sensor $i$ cover from $j^{th}$
PoI independently at random with probability $x_{ij}^{*}$ .\; }

\Return 

\caption{Randomized rounding algorithm for MWSCP}
\label{alg:three} 
\end{algorithm}

\begin{thm}
Algorithm 3 is a randomized $\text{\ensuremath{\frac{1}{2}}(1-1/e)}$-approximation
algorithm for MWSCP. 
\end{thm}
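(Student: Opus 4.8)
The plan is to read Algorithm 3 as ordinary randomized LP rounding and to prove the bound in two stages: first show that, \emph{in expectation}, the rounding delivers a $(1-1/e)$-approximation for MWSCP restricted to separated strategy, and then invoke Theorem 9 with $\beta=1-1/e$ to pay the extra factor of two and obtain the claimed $\frac{1}{2}(1-1/e)$ guarantee for MWSCP itself. So the theorem reduces to analyzing the rounding against the linear program (7), whose optimum $\sum_l \omega_l z_l^*$ dominates the optimal separated-strategy value $\mathrm{OPT}_{\mathrm{sep}}$ because (7) is a relaxation of the integer program (6).

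First I would verify that the rounding is well defined: for each sensor $i$ the constraint $\sum_j x_{ij}^*\le 1$ means the numbers $\{x_{ij}^*\}_j$ form a sub-probability distribution, so ``sensor $i$ starts at $j$ with probability $x_{ij}^*$ (and does nothing with the remaining probability)'' is legitimate, and the choices across sensors are independent. The central computation is the probability that a fixed PoI $l$ is covered. Since for a single sensor $i$ the events ``start at $j$'' are mutually exclusive, sensor $i$ covers $l$ with probability $p_i^{(l)}=\sum_{j:\,l\in S_{ij}}x_{ij}^*$, and by the covering constraint of (7) these satisfy $\sum_i p_i^{(l)}=\sum_{(i,j):\,l\in S_{ij}}x_{ij}^*\ge z_l^*$. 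By independence, the probability that $l$ is left uncovered is $\prod_i\bigl(1-p_i^{(l)}\bigr)$.

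The analytic heart is to lower-bound $1-\prod_i\bigl(1-p_i^{(l)}\bigr)$ by $(1-1/e)\,z_l^*$. I would use the arithmetic--geometric mean inequality to get $\prod_i\bigl(1-p_i^{(l)}\bigr)\le\bigl(1-\tfrac{1}{M}\sum_i p_i^{(l)}\bigr)^{M}\le\bigl(1-z_l^*/M\bigr)^{M}$, and then the concavity of $z\mapsto 1-(1-z/M)^M$ on $[0,1]$ (it vanishes at $z=0$) to linearize:
\[
1-\bigl(1-z_l^*/M\bigr)^{M}\ \ge\ \bigl(1-(1-1/M)^{M}\bigr)\,z_l^*\ \ge\ (1-1/e)\,z_l^*,
\]
using $(1-1/M)^M\le 1/e$. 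Summing with weights, the expected covered weight is $\sum_l \omega_l\Pr[l\text{ covered}]\ge(1-1/e)\sum_l\omega_l z_l^*\ge(1-1/e)\,\mathrm{OPT}_{\mathrm{sep}}$, establishing the $(1-1/e)$-approximation for the separated-strategy version. Applying Theorem 9 with $\beta=1-1/e$ then yields the $\frac{1}{2}(1-1/e)$-approximation for MWSCP.

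I expect the main obstacle to be the two inequalities in the displayed chain rather than the set-up. The AM--GM step silently replaces the number of sensors that can reach $l$ by $M$, and the concavity step must be stated for all $z_l^*\in[0,1]$ (not only $z_l^*=1$), since the LP solution is fractional; I would make sure the linearization passes through the origin so that the slope argument is valid. A minor point worth a sentence is that the guarantee is on the expectation, matching the word ``randomized'' in the statement; the conditional-expectation derandomization alluded to in the text is not needed for this claim.
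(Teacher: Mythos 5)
Your proof is correct, and its architecture is the same as the paper's: relax the separated-strategy integer program to the LP, round the fractional optimum, lower-bound each PoI's coverage probability, sum by linearity of expectation to get a $(1-1/e)$ guarantee against the optimal separated strategy, and then invoke the separated-to-general reduction (the paper's ``Theorem 9'') with $\beta=1-1/e$ to obtain $\frac{1}{2}(1-1/e)$. The one place you genuinely diverge is the coverage-probability computation, and there your version is the more rigorous one. The paper transcribes the maximum-coverage analysis of Williamson--Shmoys and asserts $\Pr[z_l \text{ not covered}]=\prod_{(i,j):\,l\in S_{ij}}(1-x^*_{ij})$, as if every pair $(i,j)$ were an independent coin flip; under the rounding the algorithm actually performs --- each sensor $i$ draws a single starting PoI from the sub-distribution $\{x^*_{ij}\}_j$ --- the events for a fixed $i$ are mutually exclusive, not independent, so that equality does not literally hold (a per-pair independent rounding would even be infeasible, as it could assign one sensor two starting points). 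Your computation $\Pr[l \text{ uncovered}]=\prod_{i}\bigl(1-p_i^{(l)}\bigr)$ with $p_i^{(l)}=\sum_{j:\,l\in S_{ij}}x^*_{ij}$ is exactly right for the per-sensor rounding, and your AM--GM over the $M$ sensors plus the concavity-through-the-origin linearization plays the role of the paper's identical chain over the $n_l$ pairs. The two analyses reach the same constant, and the paper's conclusion is safe in any case because its expression upper-bounds the true non-coverage probability (by $1-\sum_j x^*_{ij}\le\prod_j(1-x^*_{ij})$); but your write-up analyzes the algorithm as actually stated, which is a genuine improvement in care, while buying the same final bound.
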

\begin{proof} The fractional value $x_{ij}^{*}$ is interpreted as
the probability that $S_{ij}$ is chosen. The proof is similar to
the proof of theorem 5.10 in \cite{williamson2011design}. Then we
see the probability that $z_{l}$ is not covered is

\begin{eqnarray*}
Pr[z_{l}~is~not~coverd] & = & \prod_{(i,j):l\in S_{ij}}(1-x_{ij}^{*})\\
 & \le & \lbrack{\frac{1}{n_{l}}\sum_{(i,j):l\in S_{ij}}(1-x_{ij}^{*})}\rbrack^{n_{l}}\\
 & \le & (1-\frac{\sum x_{ij}^{*}}{n_{l}})^{n_{l}}\\
 & \le & (1-\frac{z_{l}^{*}}{n_{l}})^{n_{l}}
\end{eqnarray*}

Where $n_{l}$ indicates the number of sets in which $z_{l}$ is included
and the last inequality follows from the constraint (6).

When $n_{l}\ge1$, $f(z_{l}^{*})=1-(1-\frac{z_{l}^{*}}{n_{l}})^{n_{l}}$
is concave. So the probability that $z_{l}$ is covered is 
\begin{eqnarray*}
Pr[z_{l}~is~covered] & \ge & 1-(1-\frac{z_{l}^{*}}{n_{l}})^{n_{l}}\\
 & \ge & \lbrack{1-(1-\frac{1}{n_{l}})^{n_{l}}}\rbrack z_{l}^{*}
\end{eqnarray*}

Let W be a random variable that is equal to the total weight of the
covered PoIs. let $Y_{l}$ is a random variable such that $Y_{l}$
is 1 if $z_{l}$ is covered and 0 otherwise. We know that 
\begin{eqnarray*}
E[W] & = & \sum_{l=1}^{n}\omega_{l}E[Y_{l}]\\
 & = & \sum_{l=1}^{n}\omega_{l}Pr(z_{l}~is~covered)\\
 & \ge & \sum_{l=1}^{n}\omega_{l}z_{l}^{*}[1-(1-\frac{1}{n_{l}})^{n_{l}}]\\
 & \ge & min_{k\ge1}[1-(1-\frac{1}{k})^{k}]\sum_{l=1}^{n}\omega_{l}z_{l}^{*}\\
 & \ge & (1-\frac{1}{e})OPT
\end{eqnarray*}
Now we prove that we can get randomized $(1-1/e)$-approximation algorithm
for MWSCP using separated strategy problem by randomized rounding
technique. According to theorem 9, Algorithm 4 is $\frac{1}{2}(1-1/e)$-approximation
algorithm for MWSCP. \end{proof}

By the method of conditional expectations, we can obtain a deterministic
algorithm, Algorithm 4, that has the same performance guarantee as
the randomized one. So the algorithm 4 is a deterministic $\frac{1}{2}(1-1/e)$approximation
algorithm. Since it takes polynomial time of input to solve the linear
programming formulation, denoted as $p(N)$ and $M*N$ linear programming
formulations are needed to be solved in the derandomized algorithm.
The algorithm 4 is a polynomial algorithm and its time complexity
is $O(M*N*p(N))$.

\begin{algorithm}[t]
\SetAlgoNoLine \KwIn{ The number of sensors $M$, the number of
PoIs $N$, sensors' velocities $V=\{v_{1},v_{2},...,v_{M}\}$, PoIs'
weight $W=\{\omega_{1},\omega_{2},...,\omega_{N}\}$, sweep period
$T$, PoIs' locations $X=\{x_{1},x_{2},...,x_{N}\}$.} \KwOut{The
maximum sum of weight of PoIs covered. }

\For{ $i\leftarrow1$ to $M$ }{ set $j_{i}=argmax_{j\in N}E(W|x_{1j_{1}}=1,...,x_{(i-1)j_{(i-1)}}=1,x_{ij}=1)$
.\; }

\For{ $i\leftarrow1$ to $M$ }{ make sensor $i$ to cover from
$j_{i}^{th}$ PoI. \; }

$St=\cup_{i}S_{ij_{i}}$\;

$\omega t=\sum_{z_{l}\in St}\omega_{l}$\;

\Return $\omega t$\caption{Deterministic version for MWSCP}
\label{alg:four} 
\end{algorithm}

Note that in the algorithm 4,

\[
\begin{aligned} & E[W|x_{1j_{1}}=1,...,x_{(i-1)j_{(i-1)}}=1,x_{ij}=1]\\
= & \sum_{l\in N}\omega_{l}Pr[z_{l}|x_{1j_{1}}=1,...,x_{(i-1)j_{(i-1)}}=1,x_{ij}=1]
\end{aligned}
\]

where $x_{ij_{i}}=1$ means that $j_{i}$ is the final optimal starting
location of sensor $i$.

Randomized method usually gets better approximation ratio than $\frac{1}{2}(1-1/e)$
in practice . 

\section{Conclusion}

There are many applications of sweep coverage on path, such as forest
patrol and intruder detection. In this paper, we first study the problem
of point sweep coverage on path and prove it is NP-Complete. We also
define its variant, the Max-Weight sweep coverage on path problem,
which are NP-Hard. For MWSCP, we propose an optimal algorithm for
the case that sensors have uniform velocity and a $\frac{1}{2}$-approximation
algorithm for sensors with constant number of velocities. For the
general case of MWSCP, we propose two algorithms. One is a $\frac{1}{2\alpha}$-approximation
algorithm family scheme, where integer $\alpha\ge2$. The other is
a $\frac{1}{2}(1-1/e)$-approximation algorithm. It is interesting
to study Max-Weight sweep coverage problem in other kinds of graphs,
e.g., trees, Eulerian graphs and so on. That is what we are about
to study.

\section*{Acknowledgment}

This work is supported by The 985 Project Funding of Sun Yat-sen University,
Australian Research Council Discovery Projects Funding DP150104871.

\bibliographystyle{plain}
\bibliography{Sweep-coverage}

\end{document}